%
\documentclass[runningheads]{llncs}
\usepackage{algpseudocode}
\usepackage{amsmath}
\usepackage{amsfonts}
\usepackage[shortlabels]{enumitem}
\usepackage[ruled,vlined,resetcount,linesnumbered]{algorithm2e}
\makeatletter
\newcommand{\nosemic}{\renewcommand{\@endalgocfline}{\relax}}
\newcommand{\dosemic}{\renewcommand{\@endalgocfline}{\algocf@endline}}
\makeatother

\usepackage{graphicx}
%

\begin{document}
\title{Improved Algorithms for Minimum-Membership Geometric Set Cover}
%
%
\author{Sathish Govindarajan \and
Siddhartha Sarkar
}
\authorrunning{S. Govindarajan and S. Sarkar}
%
\institute{Indian Institute of Science, Bengaluru, India\\
\email{\{gsat, siddharthas1\}@iisc.ac.in}}
\maketitle              
\begin{abstract}
Bandyapadhyay et al. introduced the generalized minimum-membership geometric set cover (GMMGSC) problem [SoCG, 2023], which is defined as follows.
We are given two sets $P$ and $P'$ of points in $\mathbb{R}^{2}$, $n=\max(|P|, |P'|)$, and a set $\mathcal{S}$ of $m$ axis-parallel unit squares. The goal is to find a subset $\mathcal{S}^{*}\subseteq \mathcal{S}$ that covers all the points in $P$ while minimizing $\mathsf{memb}(P', \mathcal{S}^{*})$, where $\mathsf{memb}(P', \mathcal{S}^{*})=\max_{p\in P'}|\{s\in \mathcal{S}^{*}: p\in s\}|$.

We study GMMGSC problem and give a $16$-approximation algorithm that runs in $O(m^2\log m + m^2n)$ time. Our result is a significant improvement to the $144$-approximation given by Bandyapadhyay et al. that runs in $\tilde{O}(nm)$ time. 

GMMGSC problem is a generalization of another well-studied problem called Minimum Ply Geometric Set Cover (MPGSC), in which the goal is to minimize the ply of $\mathcal{S}^{*}$, where the ply is the maximum cardinality of a subset of the unit squares that have a non-empty intersection. The best-known result for the MPGSC problem is an $8$-approximation algorithm by Durocher et al. that runs in $O(n + m^{8}k^{4}\log k + m^{8}\log m\log k)$ time, where $k$ is the optimal ply value [WALCOM, 2023].

\keywords{Computational Geometry \and Minimum-Membership Geometric Set Cover \and Minimum Ply Covering  \and Approximation Algorithms}
\end{abstract}
\section{Introduction}
\label{sec:intro}
Set Cover is a fundamental and well-studied problem in combinatorial optimization. Given a range space $(X, \mathcal{R})$ consisting of a set $X$ and a family $\mathcal{R}$ of subsets of $X$ called the ranges, the goal is to compute a minimum cardinality subset of $\mathcal{R}$ that covers all the elements of $X$. It is NP-hard to approximate the minimum set cover below a logarithmic factor \cite{feige_1998,Raz_setcover_hard_97}. When the ranges are derived from geometric objects, it is called the geometric set cover problem. Computing the minimum cardinality geometric set cover remains NP-hard even for simple 2D objects, such as unit squares on the plane \cite{geom_set_cover_2D_hardness}. There is a rich literature on designing approximation algorithms for various geometric set cover problems (see \cite{GeomSetCover_agarwal_pan_2014,Chan_Grant_pack_cover,Clarkson_Varadarajan_geomsetcover,Hochbaum_Maass_covering,raman_geom_setcover_focs2014}). Several variants of the geometric set cover problem such as unique cover, red-blue cover, etc. are well-studied \cite{CHAN_red_blue,takehiro_unique_coverage}.

In this paper, we study a natural variant of the geometric set cover called the Generalized Minimum-Membership Geometric Set Cover (GMMGSC). This is a generalization of two well-studied problems: minimum ply geometric set cover and minimum-membership geometric set cover, which were motivated by real-world applications in interference minimization in wireless networks and have received the attention of researchers \cite{socg23_minply,Biedl_Minply,Durocher_Mondal_Minply,erlebach_soda_minply}. We define the problem below. 


\begin{definition}[Membership]
Given a set $P$ of points and a set $\mathcal{S}$ of geometric objects, the \textit{membership} of $P$ with respect to $\mathcal{S}$,  denoted by $\mathsf{memb}(P, \mathcal{S})$, is $\max_{p\in P}|\{s\in \mathcal{S}: p\in s\}|$.
\end{definition}

\begin{definition}[GMMGSC problem]
Given two sets $P$ and $P'$ of points in $\mathbb{R}^{2}$, $n=\max(|P|, |P'|)$, and a set $\mathcal{S}$ of $m$ axis-parallel unit squares, the goal is to find a subset $\mathcal{S}^{*}\subseteq \mathcal{S}$ that covers all the points in $P$ while minimizing $\mathsf{memb}(P', \mathcal{S}^{*})$.
\end{definition}

\subsection{Related Work}
\label{subsec:related_work}
Bandyapadhyay et al. introduced the generalized minimum-membership geometric set cover (GMMGSC) problem and gave a polynomial-time constant-approximation algorithm for unit squares \cite{socg23_minply}. Specifically, they consider the special case when the points in $P$ lie within a unit grid cell and all the input unit squares intersect the grid cell. They use linear programming techniques to obtain a $16$-approximation in $\tilde{O}(nm)$ time for GMMGSC problem for this special case. Here, $\tilde{O}(\cdot)$ hides some polylogarithmic factors. This implies a $144$-approximation for GMMGSC problem for unit squares.  

We note that GMMGSC problem is a generalization of two well-studied problems: (1) Minimum-Membership Geometric Set Cover problem where $P'=P$, and (2) Minimum Ply Geometric Set Cover problem where $P'$ is obtained by picking a point from each distinct \textit{region} in the \textit{arrangement} $\mathcal{A}(\mathcal{S})$ of $\mathcal{S}$.

Minimum-Membership Set Cover (MMSC) problem is well-studied in both abstract \cite{kuhn_mmsc} and geometric settings \cite{erlebach_soda_minply}. 
Kuhn et al. showed that the abstract MMSC problem admits an $O(\log m)$-approximation algorithm, where $m$ is the number of ranges. They also showed that, unless P$=$NP, this is the best possible approximation ratio.
Erlebach and van Leeuwen introduced the geometric version of the MMSC problem \cite{erlebach_soda_minply}. They showed  NP-hardness for approximating the problem with ratio less than $2$ on unit disks (i.e., disks with diameter $1$) and unit squares. They gave a $5$-approximation algorithm for unit squares that runs in $n^{O(k)}$ time, where $k$ is the minimum membership.
 
Biedl et al. introduced the Minimum Ply Geometric Set Cover (MPGSC) problem \cite{Biedl_Minply}. They gave $2$-approximation algorithms for unit squares and unit disks that run in $nm^{O(k)}$ time, where $k$ is the optimal ply of the input instance. Durocher et al. presented the first constant approximation algorithm for MPGSC problem with unit squares \cite{Durocher_Mondal_Minply}. 
 They divide the problem into subproblems by using a standard grid decomposition technique. They solve almost optimally the subproblem within a square grid cell using a dynamic programming scheme. 
 Specifically, they give an algorithm that runs in $O(n + m^{8}k^{4}\log k + m^{8}\log m\log k)$ time and outputs a solution with ply at most $8k + 32$, where $k$ is the optimal ply.
Bandyapadhyay et al. also gave a $(36+\epsilon)$-approximation algorithm for MPGSC problem with unit squares that runs in $n^{O(1/\epsilon^{2})}$ time \cite{socg23_minply,Mustafa_Ray}. 



\subsection{Our Contribution} \label{subsec:our_contri}
We first consider a special case of the GMMGSC problem called the line instance of GMMGSC, where the input squares are intersected by a horizontal line and the points to be covered lie on only one side of the line. Refer to Definition \ref{defn:minply_line}. We design a polynomial-time algorithm (i.e., Algorithm \ref{algo:exact_algo_hline}) for this problem where the solution has some desirable properties. 

Next, we consider the slab instance of GMMGSC, where the input squares intersect with a unit-height horizontal slab and the points to be covered lie within the slab. Refer to Definition (\ref{defn:minply_slab}). As far as we know, there are no known approximation results for this problem. We adapt the linear programming techniques in \cite{socg23_minply} to decompose a slab instance of GMMGSC into two line instances of GMMGSC. We use Algorithm \ref{algo:exact_algo_hline} to solve them. Then we merge the two solutions to obtain the final solution. A major challenge was finding a solution for the line GMMGSC which respects a key lemma (i.e., Lemma \ref{lemma:excl}). This key lemma enables us to obtain a solution with membership at most $(8\cdot OPT+18)$ for the slab instance, where $OPT$ is the minimum membership. 

Finally, we give an algorithm for GMMGSC problem for unit squares that runs in $O(m^{2}\log m + m^{2}n)$ time and outputs a solution whose membership is at most $16\cdot OPT+36$. We divide GMMGSC instance into multiple line instances. Then we use Algorithm \ref{algo:exact_algo_hline} on the line instances. Finally, we merge the solutions of the line instance to obtain the final solution. 

For GMMGSC problem, we note that our result is a significant improvement in the approximation ratio as compared to the best-known result of Bandyapadhyay et al \cite{socg23_minply}. For MPGSC problem, our result is a significant improvement in the running time as compared to the best-known result of Durocher et al. while achieving a slightly worse approximation ratio \cite{Durocher_Mondal_Minply}.

\section{Generalized Minimum-Membership Set Cover for Unit Squares} \label{sec:discrete}
Let $P$ and $P'$ be two sets of points in $\mathbb{R}^{2}$ and $\mathcal{S}$ be a set of axis-parallel unit squares. We want to approximate the minimum-membership set cover (abbr. MMSC) of $P$ using $\mathcal{S}$ where membership is defined with respect to $P'$. First, we divide the plane into horizontal slabs of unit height. Each slab is defined by two horizontal lines $L_1$ and $L_2$, unit distance apart, where $L_2$ is above $L_1$. We define an instance for the slab subproblem below. For an illustration, refer to Figure \ref{fig:line_slab_instances}.

\begin{definition}[Slab instance]\label{defn:minply_slab}
Consider a set $\mathcal{S}$ of unit squares where each square intersects one of the boundaries of a unit-height horizontal slab $\alpha$. The points of the set $P$ to be covered are located within $\alpha$, each point lying inside at least one of the squares in $\mathcal{S}$. Let $P'$ be a set of points with respect to which the membership is to be computed. The instance $(P, P', \mathcal{S})$ is called a slab instance.
\end{definition}

 \begin{figure}[ht!]
    \centering
    \includegraphics[width=12cm]{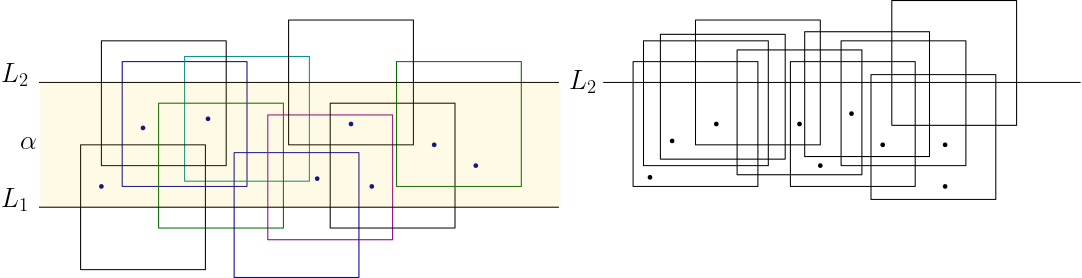}
    \caption{A slab instance and a line instance of GMMGSC.}
    \label{fig:line_slab_instances}
\end{figure}

In section \ref{sec:slab_instance}, we solve the slab instance by decomposing it into two \textit{line instances}. In the following section, we define and discuss the line instance.
\subsection{GMMGSC for the Line Instance}\label{sec:line_instance}

\begin{definition}[Line instance]\label{defn:minply_line}
Consider a set $\mathcal{S}$ of unit squares where each square intersects a horizontal line $\ell$. The points of the set $P$ to be covered are located only on one side (i.e., above or below) of $\ell$, each point lying inside at least one of the squares in $\mathcal{S}$. Let $P'$ be a set of points with respect to which the membership is to be computed. The instance $(P, P', \mathcal{S})$ is called a line instance.
\end{definition}

In the rest of this section, we design an algorithm for the line instance where the points in $P$ lie below the defining horizontal line. For the slab, it would be the instance corresponding to the top boundary line $L_2$ of the slab. Refer to Figure \ref{fig:line_slab_instances} for an example. The algorithm for the line instance corresponding to the bottom boundary line $L_1$ is symmetric. 

Let us introduce some notation first. For a unit square $s\in \mathcal{S}$, denote by $x(s)$ and $y(s)$ the $x$-coordinate and $y$-coordinate of the bottom-left corner of $s$, respectively. For a horizontal line $\ell$, denote by $y(\ell)$ the $y$-coordinate of any point on $\ell$.

We make the following non-degeneracy assumptions. First, no input square has its top boundary coinciding with the slab boundary lines. Second, $x$- and $y$-coordinates of the input squares are distinct. Note that a set $Q$ of intersecting unit squares also forms a clique in the intersection graph of $Q$, and vice versa. 

In a set of unit squares $\mathcal{S}$, two squares $s_1, s_2 \in \mathcal{S}$ are \textbf{consecutive} (from left to right) if there exists no square $t\in \mathcal{S}$ such that $x(s_1)< x(t) < x(s_2)$. 
If a point $p$ is contained in exactly one square $s$ in a set cover $\mathcal{S}^{*}$ of a set of points $P$, then $p$ is said to be an \textbf{exclusive point} of the square $s$ with respect to $\mathcal{S}^{*}$. 
For $s\in \mathcal{S}^{*}$, the region in the plane, denoted by $\mathsf{Excl}(s)$, which is covered exclusively by $s$ is called the \textbf{exclusive region} of $s$ with respect to a set cover $\mathcal{S}^{*}$. 
For $s_i, s_j\in \mathcal{S}^{*}$, the region in the plane, denoted by $\mathsf{Excl}(s_i, s_j)$, which is contained exclusively in $s_i \cap s_j$, is called the \textbf{pairwise exclusive region} of $s_i$ and $s_j$ with respect to a set cover $\mathcal{S}^{*}$. 
A square $s$ in a set cover $\mathcal{S}^{*}$ of a set of points $P$ is called \textbf{redundant} if it covers no point of $P$ exclusively. Refer to Figure \ref{fig:caldem1} for an illustration of these terms. 

\begin{figure}[ht!]
\centering
\includegraphics[width=6cm]{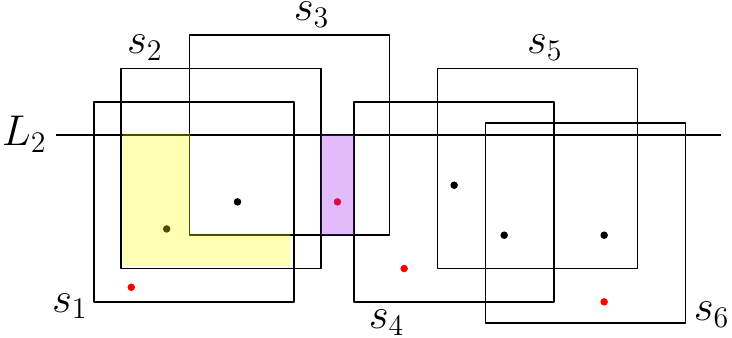}
\caption{A set cover for a line instance. For $i\in [5]$, $s_i, s_{i+1}$ are consecutive squares. The red points are exclusive points. The purple region is $\mathsf{Excl}(s_3)$. The yellow region is $\mathsf{Excl}(s_1, s_2)$. The squares $s_2, s_5$ are redundant. }
\label{fig:caldem1}
\end{figure}

A set of unit squares having a common intersection is said to form a \textbf{geometric clique}.
A set of unit squares containing a point of $P'$ in their common intersection region is said to form a \textbf{discrete clique}.
The common intersection region of a set of unit squares forming a clique $Q$ is called the \textbf{ply region} of $Q$. 
The ply region of a clique $Q$ is always rectangular. For a clique $Q$, denote by $x_{l}(Q)$ (resp. $x_{r}(Q)$) the $x$-coordinate of the left (resp. right) boundary of the ply region of $Q$. Unless specified otherwise, a clique refers to a discrete clique.
\subsubsection{Types of legal cliques}
First, we classify a clique with respect to a line instance.
A set of intersecting squares in a line instance is called a \textbf{top-anchored clique} when the points to be covered lie below the line with respect to which the line instance is defined.  
A set of intersecting squares in a line instance is called a \textbf{bottom-anchored clique} when the points to be covered lie above the line with respect to which the line instance is defined.  

Let a line instance be defined with respect to a horizontal line $\ell$. Let $s_1, \ldots, s_k$ be a sequence of squares from left to right having a common intersection. This set of squares is called a \textbf{monotonic ascending clique} if $i < j$ implies $y(s_i)< y(s_j)$, for all $i,j$. We use the abbreviation $ASC$ to denote such a clique. On the other hand, if $i < j$ implies $y(s_i)> y(s_j)$,  for all $i,j$, then this set of squares is called a \textbf{monotonic descending clique}. We use the abbreviation $DESC$ to denote such a clique.

Let a line instance be defined with respect to a horizontal line $l$. Let $s_1, \ldots, s_k,$
$s_{k+1}, \ldots s_{k+r}$ be a sequence of squares from left to right having a common intersection. This set of squares is called a \textbf{composite clique} if the following holds.
    \begin{itemize}
        \item The sequence of squares $s_1, \ldots, s_k$ forms a monotonic clique.
        \item Either $y(s_{k+1})>y(s_{k})< y(s_{k-1})$, 
        or $y(s_{k-1})< y(s_{k})> y(s_{k+1})$, and
        \item The sequence of squares $s_{k+1}, \ldots, s_{k+r}$ forms a monotonic clique.
    \end{itemize}
    The square $s_k$ is called the \textit{transition square}. We use the abbreviation $DESC|ASC$ to denote a composite clique where the sequence $s_1, \ldots, s_k$ is descending but the sequence $s_{k+1},\ldots, s_{k+r}$ is ascending. For other types of composite cliques, the abbreviation would be self-explanatory.

\begin{figure}[ht!]
\centering
\includegraphics[width=10cm]{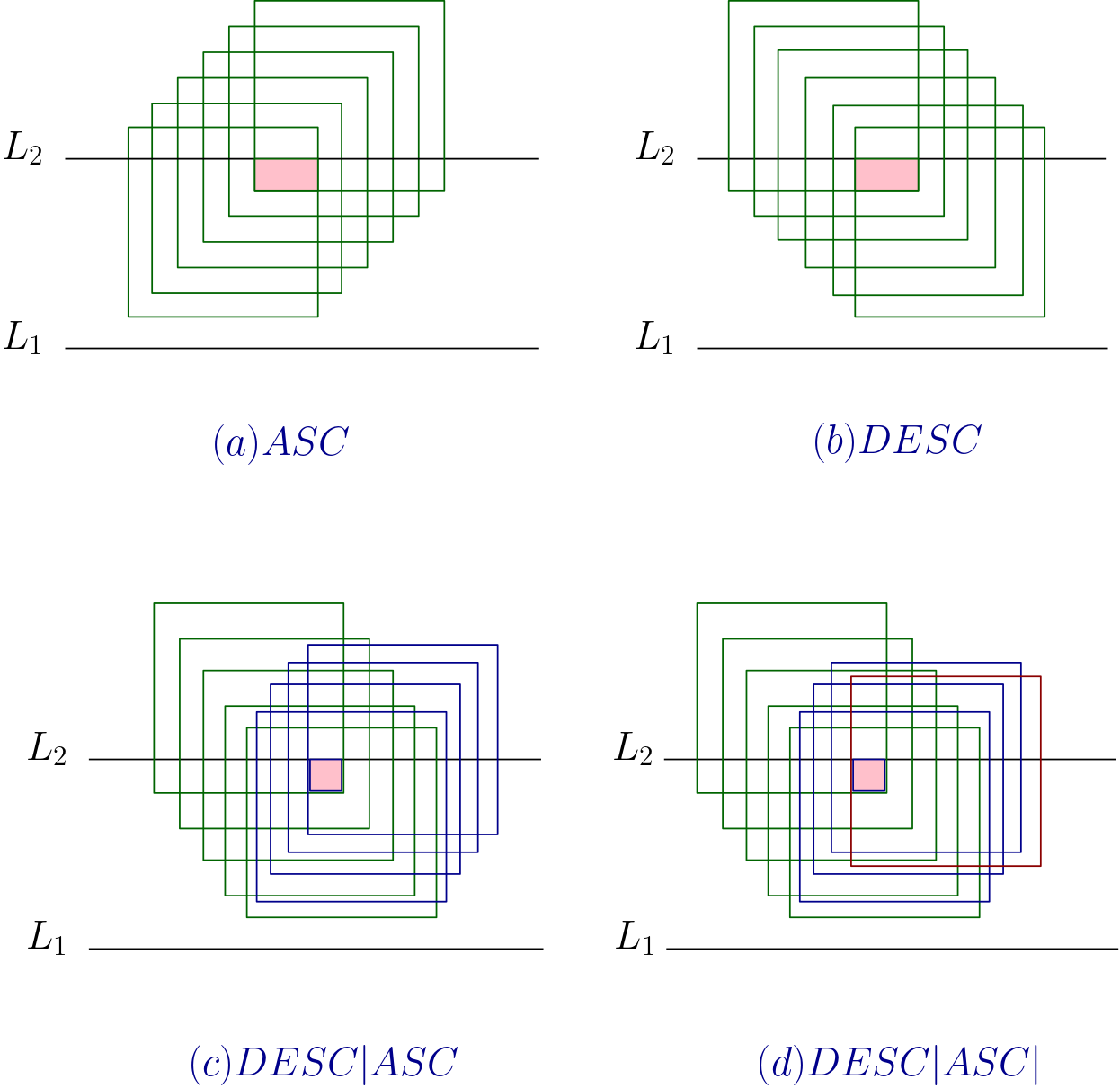}
\caption{(a), (b), and (c) show three different types of legal top-anchored cliques. (d) Shows an invalid clique where a top-anchored composite $DESC|ASC$ clique is followed by another \textit{transition}.}
\label{fig:top_all}
\end{figure}

\begin{claim}\label{claim:forbidden_1}
    In a set cover for the line instance, where none of the constituent squares are redundant, a top-anchored composite clique must be of type $DESC|ASC$.
\end{claim}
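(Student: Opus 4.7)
The plan is a proof by contradiction. Suppose, for contradiction, that a top-anchored composite clique $s_1,\ldots,s_k,s_{k+1},\ldots,s_{k+r}$ in the set cover has type $ASC|DESC$, so the transition square $s_k$ is a strict local maximum of the $y$-coordinate: $y(s_{k-1}) < y(s_k) > y(s_{k+1})$. I will show that under this configuration $s_k$ can cover no point of $P$ exclusively, making it redundant and contradicting the hypothesis. Since the instance is top-anchored with respect to a horizontal line $\ell$, every point of $P$ lies below $\ell$, so it is enough to prove the geometric containment
\[
s_k \cap \{y \le y(\ell)\} \;\subseteq\; s_{k-1} \cup s_{k+1}.
\]

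For the horizontal direction, note that $s_{k-1}, s_k, s_{k+1}$ all belong to the same geometric clique, so in particular $s_{k-1}$ and $s_{k+1}$ intersect, which for unit squares gives $x(s_{k-1})+1 > x(s_{k+1})$. Combined with the left-to-right ordering $x(s_{k-1}) < x(s_k) < x(s_{k+1}) < x(s_k)+1$, the two intervals $[x(s_k),\,x(s_{k-1})+1]$ and $[x(s_{k+1}),\,x(s_k)+1]$ are both nonempty and their union is exactly $[x(s_k),\,x(s_k)+1]$, the horizontal extent of $s_k$.

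For the vertical direction, every square in the clique intersects $\ell$, so $y(s_{k-1})+1 \ge y(\ell)$ and $y(s_{k+1})+1 \ge y(\ell)$. Since $y(s_{k-1}) < y(s_k)$ and $y(s_{k+1}) < y(s_k)$, each of $s_{k-1}$ and $s_{k+1}$ covers the entire vertical strip $[y(s_k),\,y(\ell)]$ within its own $x$-footprint. Putting the two directions together, $s_{k-1}$ covers $[x(s_k),\,x(s_{k-1})+1] \times [y(s_k),\,y(\ell)]$ and $s_{k+1}$ covers $[x(s_{k+1}),\,x(s_k)+1] \times [y(s_k),\,y(\ell)]$; their union is the full rectangle $[x(s_k),x(s_k)+1]\times[y(s_k),y(\ell)] = s_k \cap \{y\le y(\ell)\}$. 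Hence every point of $P$ contained in $s_k$ is already covered by $s_{k-1}$ or $s_{k+1}$, so $s_k$ is redundant, a contradiction.

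I do not expect a serious technical obstacle; the argument is essentially a bookkeeping check on rectangle containment. The only step that needs care is the horizontal coverage, which rests squarely on the clique property forcing $x(s_{k-1})+1 > x(s_{k+1})$. Observe also that the symmetric case $DESC|ASC$ is \emph{not} excluded by this argument: there $s_k$ is a local minimum in $y$, its bottom edge extends strictly below those of $s_{k-1}$ and $s_{k+1}$, and so $s_k$ can carry exclusive points of $P$ in the strip $[y(s_k),\,\min(y(s_{k-1}),y(s_{k+1}))]$ below $\ell$.
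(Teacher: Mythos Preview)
Your argument is correct and follows the same idea as the paper: the transition square that is a strict local maximum in $y$ is covered below $\ell$ by its two neighbors, hence redundant. You supply the rectangle bookkeeping that the paper leaves implicit, and the horizontal step correctly rests on the clique property forcing $x(s_{k-1})+1>x(s_{k+1})$.

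The one gap is case coverage. You treat only $ASC|DESC$, but the claim must also exclude $ASC|ASC$ and $DESC|DESC$. For $ASC|ASC$ the transition square $s_k$ is again a local maximum ($y(s_{k-1})<y(s_k)>y(s_{k+1})$), so your argument applies verbatim. For $DESC|DESC$ the transition $s_k$ is a local \emph{minimum}, so your argument does not apply to $s_k$; but then $y(s_k)<y(s_{k+1})>y(s_{k+2})$ makes $s_{k+1}$ a local maximum, and your containment argument with neighbors $s_k,s_{k+2}$ (all three in the common clique) shows $s_{k+1}$ is redundant. The paper itself only works out $ASC|ASC$ and dismisses the remaining two with ``similar arguments apply,'' so your omission is at the same level; just add one sentence naming the three forbidden types and indicating which square plays the role of the local maximum in each.
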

\begin{proof}
    Suppose not. Let $Q$ be a top-anchored composite clique of type $ASC|ASC$, where the left ascending sequence be the squares $s_1, \ldots, s_{k}$ from left to right. Then the square $s_k$ would become redundant since the two squares $s_{k-1}$, $s_{k+1}$ would cover all the points covered by $s_k$ lying below the defining horizontal line. This contradicts the non-redundancy condition. For an example, refer to Figure (\ref{fig:top_asc_asc}). 
    Similar arguments apply to rule out the existence of top-anchored composite cliques of type $ASC|DESC$ and $DESC|DESC$.
\end{proof}
In the rest of the paper, whenever we consider a top-anchored clique in the solution (containing no redundant squares) of a line instance of the GMMGSC problem, we assume that the clique is of one of the following legal types: (i) monotonic $ASC$, (ii) monotonic $DESC$, or (iii) composite $DESC|ASC$.
\begin{figure}[ht!]
\centering
\includegraphics[width=13cm]{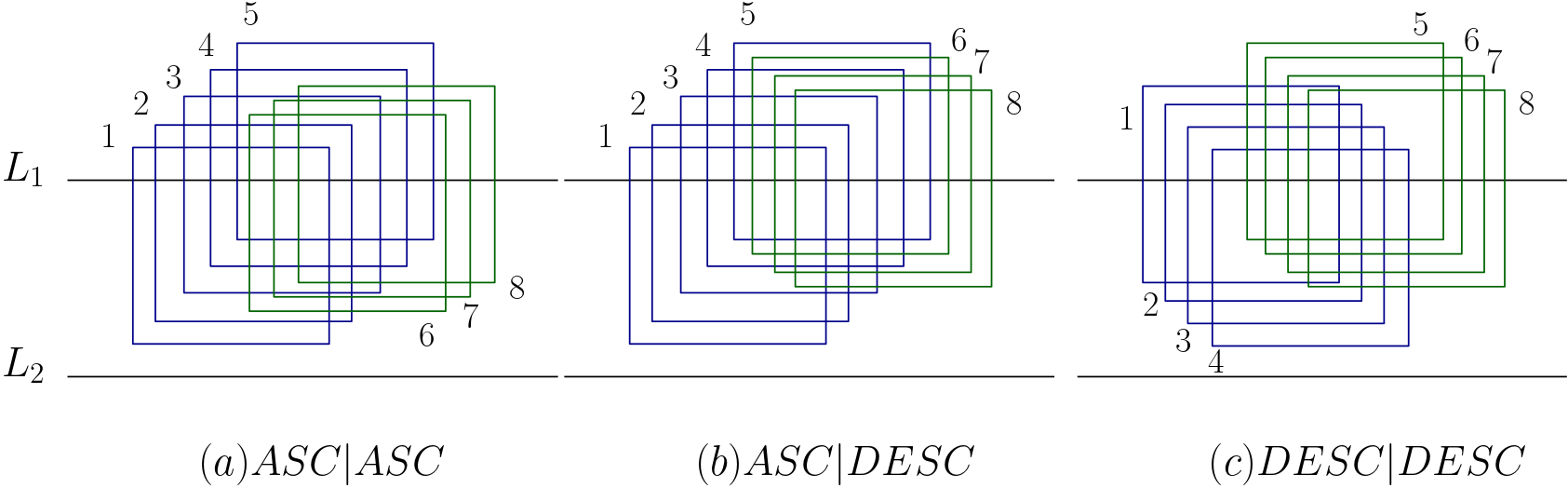}
\caption{Different types of a forbidden top-anchored clique. In all the figures, the square $5$ is redundant since the squares $4$ and $6$ fully cover the area of square $5$ below the line $L_1$.}
\label{fig:top_asc_asc}
\end{figure}
\subsubsection{The Algorithm}
In this subsection, we describe an algorithm for the line instance $(P, P', \mathcal{S})$ that produces a feasible set cover $\mathcal{S}^{*}$ with some desirable structural properties. Multiple maximum cliques may exist in the intersection graph of $\mathcal{S}^{*}$. We order the maximum cliques of $\mathcal{S}^{*}$ in the increasing order of the $x_{r}(\cdot)$ values of their ply regions. 

\begin{definition}[Leftmost maximum clique]\label{defn: l_max_clique}
The leftmost maximum clique $Q$ of a set of unit squares $\mathcal{S}^{*}$ refers to that maximum clique in the intersection graph of $\mathcal{S}^{*}$ for which $x_{r}(\cdot)$ value of the corresponding ply region is the minimum among all the maximum cliques in $\mathcal{S}^{*}$.
\end{definition}
The procedure $\mathsf{Swap}(\mathcal{A}, \{s\}, \mathcal{S}^{*})$ consists of the deletion of a set of squares $\mathcal{A}$ from a set cover $\mathcal{S}^{*}$ and the addition of a square $s\in \mathcal{S}\setminus \mathcal{S}^{*}$ into $\mathcal{S}^{*}$.

\begin{definition}[Profitable swap]\label{defn:profit_swap}
The operation $\mathsf{Swap}(\mathcal{A}, \{s\},  \mathcal{S}^{*})$ is a profitable swap if $\mathcal{A}$ is a set of two or more consecutive squares in the leftmost maximum clique $Q\subseteq \mathcal{S}^{*}$, and $s\in \mathcal{S}\setminus \mathcal{S}^{*}$ such that $\mathcal{S}^{*}\cup \{s\}\setminus \mathcal{A}$ is a feasible set cover for $P$.
\end{definition}

The procedure $\mathsf{RemoveRedundancy}(\mathcal{S})$ ensures that each square $s\in \mathcal{S}^{*}$ contains at least one point $p\in P$ exclusively. We proved that in a set cover for the line instance, where none of the constituent squares are redundant, any maximum clique can be of only $3$ types, as shown in Figure \ref{fig:domination_lemma}.

\begin{algorithm}[ht]
\SetAlgoLined
\SetKwInOut{Input}{Input}
\SetKwInOut{Output}{Output}
\Input{ A horizontal line $\ell$, a set $\mathcal{S}$ of $m$ unit squares intersecting $\ell$, a set $P$ of $n$ points below $\ell$ such that each of them lies in at least one square in $\mathcal{S}$, and a set of points $P'$.}
\Output{ Returns a set of squares $\mathcal{S}^{*} \subseteq \mathcal{S}$ covering $P$.}
\nosemic
$ \mathcal{S}^{*} \leftarrow \mathsf{RemoveRedundancy}(\mathcal{S})$\;
Let $Q$ be the leftmost maximum clique in $\mathcal{S}^{*}$.\;
\While{\text{there is a $\mathsf{profitable\: swap}$ in $Q$}}{
$\mathcal{S}^{*} \leftarrow \mathsf{Swap}(\mathcal{A}, \{s\}, \mathcal{S}^{*})$\;
$ \mathcal{S}^{*} \leftarrow \mathsf{RemoveRedundancy}(\mathcal{S}^{*})$\;
Update $Q$ to be the leftmost maximum clique in $\mathcal{S}^{*}$.\;
}
\Return $\mathcal{S}^{*}$
\caption{Algorithm for line instance of GMMGSC}
\label{algo:exact_algo_hline}
\end{algorithm}

The algorithm for the line instance is given in Algorithm \ref{algo:exact_algo_hline} and has two steps. In the first step, $\mathsf{RemoveRedundancy(\cdot)}$ is applied on the set $\mathcal{S}$ of input squares to obtain a feasible set cover $\mathcal{S}^{*}$ containing no redundant squares. The second step performs a set of \textit{profitable swap}s on $\mathcal{S}^{*}$. This step aims to obtain a feasible solution with a maximum clique $Q$ with some desirable properties. A profitable swap is defined on the leftmost maximum clique of a feasible solution. Refer to Definition \ref{defn:profit_swap}. We have the following observation about the solution.\\
\begin{observation}\label{obs:1}
There exist no profitable swaps on the leftmost maximum clique $Q$ of $\mathcal{S}^{*}$ returned by Algorithm \ref{algo:exact_algo_hline}.    
\end{observation}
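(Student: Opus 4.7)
The plan is to establish two things: first, that Algorithm \ref{algo:exact_algo_hline} terminates in finite time, and second, that upon termination the stated property holds. Once termination is in hand, the observation is essentially the negation of the while-loop guard.

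For termination, I would exhibit a strictly decreasing integer monovariant. By Definition \ref{defn:profit_swap}, a profitable swap replaces a set $\mathcal{A}$ of at least two consecutive squares of $Q$ by a single square $s \in \mathcal{S}\setminus \mathcal{S}^{*}$, so each iteration of the loop decreases $|\mathcal{S}^{*}|$ by at least $|\mathcal{A}|-1 \geq 1$. The subsequent call to $\mathsf{RemoveRedundancy}(\mathcal{S}^{*})$ can only further reduce (or leave unchanged) this cardinality, since it only discards squares. As $|\mathcal{S}^{*}|$ is a nonnegative integer bounded above by $m$, the loop executes at most $m$ times and thus halts.

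For the conclusion, I would simply read off the exit condition. The while-loop continues precisely while there is a profitable swap in the current leftmost maximum clique $Q$. Inside the loop, line 6 re-assigns $Q$ to be the leftmost maximum clique of the updated $\mathcal{S}^{*}$. Hence at the moment the loop is exited, the variable $Q$ coincides with the leftmost maximum clique of the $\mathcal{S}^{*}$ that is about to be returned, and the negated guard yields exactly the statement of the observation.

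The only subtlety worth mentioning is to check that the relevant quantities remain well-defined throughout. One must verify the feasibility invariant that $\mathcal{S}^{*}$ covers $P$ at every iteration: this holds at initialization because $\mathsf{RemoveRedundancy}$ applied to a feasible instance returns a feasible cover, is preserved by profitable swaps by definition, and is preserved by $\mathsf{RemoveRedundancy}$ because redundant squares cover no exclusive point of $P$. Feasibility guarantees $\mathcal{S}^{*} \neq \emptyset$, so the intersection graph is nonempty and the leftmost maximum clique $Q$ referenced by line 6 (and by the returned cover) is well-defined. With this invariant in place, the observation is immediate, and I do not anticipate any genuine obstacle beyond the bookkeeping above.
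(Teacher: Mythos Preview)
Your proposal is correct and matches the paper's reasoning: the observation is immediate from the negated while-loop guard, and termination follows because each profitable swap strictly decreases $|\mathcal{S}^{*}|$. The paper in fact states the observation without proof and defers the termination argument (``every profitable swap decreases the size of the set cover by at least one'') to the running-time analysis in Theorem~\ref{thm:line_run_time}; you have simply bundled that argument here and added the feasibility invariant for completeness.
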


Our algorithm implicitly implies the following lemma.
\begin{lemma}[Key Lemma]\label{lemma:excl}
Let $Q$ be the leftmost maximum clique in the solution $\mathcal{S}^{*}$ returned by Algorithm \ref{algo:exact_algo_hline}. Let $|Q|=k$ and $s_1, \ldots, s_k$ be the squares of $Q$ from left to right. No input square contains $\mathsf{Excl}(s_i)\cup \mathsf{Excl}(s_{i+1})$, where $1\leq i\leq k-1$.
\end{lemma}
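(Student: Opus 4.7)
The plan is a proof by contradiction, exploiting Observation~\ref{obs:1}: after Algorithm~\ref{algo:exact_algo_hline} terminates, no profitable swap exists on the leftmost maximum clique $Q$ of $\mathcal{S}^{*}$. I suppose that some input square $s\in\mathcal{S}$ contains $\mathsf{Excl}(s_i)\cup\mathsf{Excl}(s_{i+1})$ and exhibit a profitable swap of the form $\mathsf{Swap}(\{s_i,s_{i+1}\},\{s\},\mathcal{S}^{*})$, which will be the contradiction.

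First I would rule out the case $s\in\mathcal{S}^{*}$. Because $\mathsf{RemoveRedundancy}$ is applied inside the while loop (and on the initial set), every square of $\mathcal{S}^{*}$ owns at least one exclusive point of $P$; in particular $\mathsf{Excl}(s_i)$ and $\mathsf{Excl}(s_{i+1})$ are non-empty. But by the definition of exclusive region, the exclusive region of any square of $\mathcal{S}^{*}$ is disjoint from every other square of $\mathcal{S}^{*}$, so no $s\in\mathcal{S}^{*}\setminus\{s_i\}$ can contain the non-empty $\mathsf{Excl}(s_i)$, and similarly for $s_{i+1}$. A brief case split on $s=s_i$, $s=s_{i+1}$, or $s\in\mathcal{S}^{*}\setminus\{s_i,s_{i+1}\}$ then contradicts the containment assumption.

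So $s\in\mathcal{S}\setminus\mathcal{S}^{*}$, the squares $s_i,s_{i+1}$ are two consecutive members of $Q$, and the only thing left to verify for profitability is feasibility: that $(\mathcal{S}^{*}\setminus\{s_i,s_{i+1}\})\cup\{s\}$ still covers $P$. For any $p\in P$, if $p\notin s_i\cup s_{i+1}$ its coverage is unaffected, and if $p$ is covered by some $s'\in\mathcal{S}^{*}\setminus\{s_i,s_{i+1}\}$ it is still covered. Otherwise $p$ lies in $\mathsf{Excl}(s_i)\cup\mathsf{Excl}(s_{i+1})\cup\mathsf{Excl}(s_i,s_{i+1})$; the first two pieces sit in $s$ by hypothesis. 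For a point of $P\cap\mathsf{Excl}(s_i,s_{i+1})$, I would use the fact that $s$, being an axis-aligned unit rectangle, contains the axis-aligned bounding box of $\mathsf{Excl}(s_i)\cup\mathsf{Excl}(s_{i+1})$, and then argue that every such $p$ lies inside that bounding box by invoking the line-instance setup (all squares intersect $\ell$ from above, all of $P$ lies below $\ell$), the common intersection point shared by the squares of $Q$, and the ASC / DESC / DESC$|$ASC classification from Claim~\ref{claim:forbidden_1}.

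The main obstacle is exactly this final geometric step for $\mathsf{Excl}(s_i,s_{i+1})\cap P$: it does not follow from pure set-theoretic containment and is the only place where the specific unit-square geometry and the legal shape of $Q$ really bite. Once it is established, the swap is feasible, hence profitable by definition, which contradicts Observation~\ref{obs:1} and completes the proof.
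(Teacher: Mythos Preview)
Your approach coincides with the paper's: suppose some input square $t$ contains $\mathsf{Excl}(s_i)\cup\mathsf{Excl}(s_{i+1})$ and contradict Observation~\ref{obs:1} via $\mathsf{Swap}(\{s_i,s_{i+1}\},\{t\},\mathcal{S}^{*})$. The paper's proof is exactly that one line and nothing more; it neither argues that $t\notin\mathcal{S}^{*}$ nor checks that points of $P$ lying in the pairwise exclusive region $\mathsf{Excl}(s_i,s_{i+1})$ remain covered after the swap.

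Your sketch is therefore strictly more careful than the paper's own argument. The case split ruling out $t\in\mathcal{S}^{*}$ is correct and easy. The step you flag as the ``main obstacle'' is a genuine subtlety that the paper simply glosses over: feasibility is part of Definition~\ref{defn:profit_swap}, and the running-time discussion in Theorem~\ref{thm:line_run_time} even treats ``checking if all points in $\mathsf{Excl}(s_i,s_j)$ are covered by $s$'' as a separate test, so termination of the while loop does not \emph{directly} exclude a $t$ that covers $\mathsf{Excl}(s_i)\cup\mathsf{Excl}(s_{i+1})$ while missing some point of $P$ in $\mathsf{Excl}(s_i,s_{i+1})$. Your bounding-box-plus-legal-clique-shape idea is the natural route to close this gap, but neither you nor the paper has actually written it out.
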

\begin{proof}
Fix an index $i$ with $1\leq i\leq k-1$. Suppose for the sake of contradiction that there exists an input square $t$ that covers $\mathsf{Excl}(s_i)\cup \mathsf{Excl}(s_{i+1})$. Observe that $\mathsf{Swap}(\{s_i, s_{i+1}\}, \{t\}, \mathcal{S}^{*})$ is a profitable swap in $\mathcal{S}^{*}$, which is a contradiction to Observation \ref{obs:1}.
\end{proof}

First, we will show how to implement the algorithm and analyze the running time. 

\begin{observation}\label{obs:2}
The procedure $\mathsf{RemoveRedundancy(\mathcal{S})}$ can be implemented in $O(nm)$ time.   
\end{observation}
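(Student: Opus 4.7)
The plan is to implement $\mathsf{RemoveRedundancy}$ as a single greedy sweep over the squares, maintaining for each point $p \in P$ a coverage counter $c(p)$ equal to the number of squares in the current candidate cover that contain $p$. A square is deleted precisely when none of the points it contains rely on it exclusively at the moment it is examined.

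First I would preprocess the input in a single $O(nm)$ pass: for each pair $(s,p) \in \mathcal{S} \times P$, test whether $p \in s$, and use these tests to build, on the fly, both the list $P_s := \{p \in P : p \in s\}$ for every square $s$ and the counter $c(p) := |\{s \in \mathcal{S} : p \in s\}|$ for every point $p$. Since every point is covered by at least one square, $c(p) \geq 1$ for all $p$ at the start. Initialize $\mathcal{S}^{*} := \mathcal{S}$ and iterate through the squares in an arbitrary fixed order. For the current square $s$, scan its precomputed list $P_s$ and test whether $c(p) \geq 2$ for every $p \in P_s$; if so, $s$ is redundant in the present $\mathcal{S}^{*}$, so remove it from $\mathcal{S}^{*}$ and decrement $c(p)$ for each $p \in P_s$. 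Both the redundancy check and the subsequent update cost $O(|P_s|)$, and since $\sum_{s \in \mathcal{S}} |P_s| = \sum_{p \in P} c(p) \leq nm$, the loop runs in $O(nm)$ time overall; together with the preprocessing this gives the claimed $O(nm)$ bound.

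For correctness, I would first note that $\mathcal{S}^{*}$ remains a valid cover throughout: a square $s$ is removed only when every point in $P_s$ has at least one other covering square, so no point becomes uncovered, and in particular $c(p) \geq 1$ is invariant. Second, every square that survives to the end is non-redundant in the final $\mathcal{S}^{*}$: if $s$ was kept, then at its processing time some witness $p \in P_s$ satisfied $c(p) = 1$, meaning $s$ was then the unique coverer of $p$ in $\mathcal{S}^{*}$; since $c(\cdot)$ only decreases as the sweep continues and $s$ itself is never subsequently removed, $p$ remains exclusively covered by $s$ at termination.

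The only genuinely delicate point is the monotonicity argument that justifies examining each square exactly once rather than iterating to a fixed point, but this is immediate: removals can only lower coverage counts, so a square flagged as non-redundant at some step cannot be rendered redundant by later removals. Everything else is bookkeeping, and the $O(nm)$ running time follows from charging each elementary operation to a point-square incidence, of which there are at most $nm$.
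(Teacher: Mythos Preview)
Your proposal is correct and follows essentially the same approach as the paper: both maintain a per-point coverage counter, make a single pass over the squares, delete a square when none of its points has counter $1$, and decrement the affected counters; the only cosmetic difference is that you store incidence lists $P_s$ while the paper stores an $n\times m$ incidence matrix. Your write-up additionally spells out the monotonicity argument showing a single pass suffices, which the paper leaves implicit.
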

\begin{proof}
Consider an arbitrary ordering of the $n$ points in $P$ and an arbitrary ordering of the $m$ squares in $\mathcal{S}$. In $O(nm)$ time, we construct an $n\times m$ matrix $T$ where the $(i, j)$-th entry is $1$ if the $i$-th point is contained in the $j$-th square. Also, construct an array $\mathsf{count}$ of length $n$. For every point $p_{i}$, $\mathsf{count}[i]$ stores the number of squares in $\mathcal{S}$ that contain $p_{i}$. Initialize each entry of an $m$-length array $\mathsf{removed}$ to $0$. Run a loop that iterates over each square $s_j\in \mathcal{S}$. If $s_j$ does not contain any point $p_{i}$ with $\mathsf{count}[i]=1$, remove $s_j$, i.e., set $\mathsf{removed}[j]=1$. Set each entry in the $j$-th column of $T$ to zero. For each point $p_{i'}\in s_j$, decrease $\mathsf{count}[i']$ by one. At the end of the loop, output the squares with $\mathsf{removed}[\cdot] = 0$. Naively, the time required to perform all the operations is $O(nm)$.
\end{proof}

\begin{theorem}\label{thm:line_run_time}
The running time of Algorithm \ref{algo:exact_algo_hline} is $O(m^{2}\log m + m^{2}n)$.
\end{theorem}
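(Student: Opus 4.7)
The plan is to separately bound the number of outer iterations of the \textbf{while} loop and the cost of one iteration, and then multiply.

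For the iteration count, observe that every profitable swap deletes $|\mathcal{A}|\ge 2$ consecutive squares from $\mathcal{S}^{*}$ and inserts a single new square, so $|\mathcal{S}^{*}|$ strictly decreases in each iteration. Since $|\mathcal{S}^{*}|\le m$ initially, the loop executes at most $m$ times.

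Each iteration charges three subroutines. (a) $\mathsf{RemoveRedundancy}$ costs $O(nm)$ by Observation~\ref{obs:2}; as a byproduct I keep, for every $p\in P$, the list $\mathcal{S}^{*}_{p}=\{s\in \mathcal{S}^{*}: p\in s\}$ and its size. (b) The leftmost maximum clique $Q$ is the leftmost point of maximum ply in $\mathcal{S}^{*}$, which a standard left-to-right plane sweep with a segment tree on the $y$-axis finds in $O(m\log m)$ time. (c) To search for a profitable swap, it suffices to consider only consecutive \emph{pairs} $\{s_i,s_{i+1}\}\subseteq Q$: if a profitable swap exists with $|\mathcal{A}|\ge 3$, then replacing only a consecutive sub-pair inside $\mathcal{A}$ with the same square $t$ leaves strictly more squares in $\mathcal{S}^{*}$ and is therefore still a feasible cover. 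For each such pair set
\[
X_{i,i+1}=\{p\in P:\ \mathcal{S}^{*}_{p}\subseteq\{s_i,s_{i+1}\}\},
\]
so a candidate $t\in\mathcal{S}$ gives a profitable swap iff $X_{i,i+1}\subseteq t$, which is checkable in $O(|X_{i,i+1}|)$ time using the information stored in (a).

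The main obstacle is pushing the total cost of (c) down to $O(nm)$, which I would do via the amortization $\sum_{i=1}^{k-1}|X_{i,i+1}|\le 2n$: any point with $|\mathcal{S}^{*}_{p}|\ge 3$ belongs to no $X_{i,i+1}$, while a point with $|\mathcal{S}^{*}_{p}|\le 2$ lies in at most two consecutive pairs (once per covering square in $Q$). Summing the $O(|X_{i,i+1}|)$ feasibility checks over all $k-1=O(m)$ pairs and $O(m)$ candidate squares gives $O(m\cdot\sum_i|X_{i,i+1}|)+O(nm)=O(nm)$ for step (c). Adding (a)--(c) yields $O(nm+m\log m)$ per iteration, and multiplying by the $O(m)$ iterations gives the claimed $O(m^{2}\log m+m^{2}n)$ total.
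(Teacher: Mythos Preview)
Your overall structure matches the paper's: at most $m$ iterations because $|\mathcal{S}^{*}|$ strictly drops after each swap, and $O(nm+m\log m)$ work per iteration. Where you genuinely differ is step~(c). The paper, for each of the $m$ candidate squares $t\in\mathcal{S}\setminus\mathcal{S}^{*}$, binary-searches over the left-to-right order of $Q$ to locate a maximal run of consecutive squares whose exclusive regions sit inside $t$, then spends $O(n)$ to verify the two endpoint squares, giving $O(m\log m+mn)$ per iteration. You instead reduce to testing only consecutive \emph{pairs} $\{s_i,s_{i+1}\}$ (valid, since any larger profitable $\mathcal{A}$ contains a consecutive sub-pair that is itself profitable with the same $t$) and amortize via $\sum_i|X_{i,i+1}|\le 2n$ to land on $O(mn)$. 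Both routes hit the same per-iteration cost; your pair reduction is arguably cleaner and avoids having to justify why a binary search on $Q$ actually identifies the right block.

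One slip to correct in step~(b): by the paper's standing convention a ``clique'' is a \emph{discrete} clique---the squares through some point of $P'$---and the paper's own proof computes $Q$ by the $O(nm)$ scan ``for each $p\in P'$, list the squares containing $p$'', not by a sweep over the squares alone. Your plane sweep with a segment tree recovers the geometric maximum-ply region, which need not be witnessed by any point of $P'$ and hence is not the $Q$ the algorithm asks for. Replacing your $O(m\log m)$ sweep by the paper's $O(nm)$ scan fixes this without affecting the final bound, since step~(a) already costs $O(nm)$.
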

\begin{proof}
To compute the leftmost maximum clique $Q$ in $\mathcal{S}^{*}$, we obtain the set of squares containing each point of $P'$. This takes $O(nm)$ time.

We need to check if there exists a profitable swap in $Q$. There are at most $m$ choices for the swapped-in square $s\in \mathcal{S}\setminus\mathcal{S}^{*}$. While computing $Q$, one could also obtain the left-to-right ordering of the squares in $Q$. For each candidate swapped-in square $s$, we find a set of consecutive squares $\mathcal{A} \subseteq Q$ that can lead to a profitable swap of $\mathcal{A}$ by $s$. We can use binary search on the squares of $Q$ to do this. This requires $O(\log |Q|) = O(\log m)$ time. Additionally, we may need to check if the extreme two squares of $\mathcal{A}$, say $s_i$ and $s_j$, can be swapped out safely. This is equivalent to checking if all points in $\mathsf{Excl}(s_i, s_j)$ are covered by $s$. Naively, the time required to check this is $O(n)$. Thus, if one exists, we can execute a profitable swap in $Q$ in $O(m\log m + mn)$ time. 

We need to remove from $\mathcal{S}^{*}$ those squares in $\mathcal{S}^{*} \setminus Q$ that may have become redundant because of swapping in the square $s$. This can be done by re-invoking $\mathsf{RemoveRedundancy(\cdot)}$ on $\mathcal{S}^{*}$ in $O(nm)$ time (due to Observation \ref{obs:2}). The leftmost maximum clique in $\mathcal{S}^{*}$ can be determined in $O(nm)$ time. 
 
Every profitable swap decreases the size of the set cover by at least one. Therefore, at most $m$ profitable swaps are performed. Thus, there are at most $m$ iterations of the \textit{while} loop. Hence, the total running time of the \textit{while} loop is $O(m\cdot(m\log m + mn)) = O(m^{2}\log m + m^{2}n)$. 
\end{proof}

\subsubsection{Structural Properties of the solution}
We state two properties about the structure of the solution returned by Algorithm \ref{algo:exact_algo_hline}. 
Let $Q$ be the leftmost maximum clique in the solution $\mathcal{S}^{*}$ returned by Algorithm \ref{algo:exact_algo_hline} for the line instance $(P, P', \mathcal{S})$.  Let $s_1, \ldots, s_k$ be the squares of $Q$ from left to right. For $1\leq j\leq k$, let $p_j$ be the bottom-most exclusive point in $s_j$.
\begin{lemma}\label{lemma:domination}
Let $p\in P'$ be an arbitrary point contained in the common intersection region of $Q$. For $k> 13$, there exists a set $J\subset [k]$ with $|J|\geq k-9$ such that every input square $t\in \mathcal{S}$ containing $p_{j}$ also contains $p$, for $j\in J$.
\end{lemma}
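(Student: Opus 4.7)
The plan is to perform a case analysis on the type of the leftmost maximum clique $Q$, which by Claim \ref{claim:forbidden_1} must be one of: monotonic $ASC$, monotonic $DESC$, or composite $DESC|ASC$. For each type, I will first characterize the geometric location of each exclusive region $\mathsf{Excl}(s_j)$ restricted to the half-plane below $\ell$ (where $P$ lies), and then pin down the bottom-most exclusive point $p_j$. For a middle index $j$, the consecutive neighbors $s_{j-1}$ and $s_{j+1}$ constrain $\mathsf{Excl}(s_j)$ from both sides (with the other squares of $Q$ geometrically dominated by the consecutive neighbors in how they cover $s_j$), forcing $p_j$ into a small region whose $x$-projection lies inside the ply region of $Q$ and whose $y$-coordinate is close to $y(s_j)$.

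The core argument is to bound the set of bad indices $B = [k]\setminus J$, meaning indices $j$ for which there exists an input square $t\in \mathcal{S}$ containing $p_j$ but not $p$. For such a $t$, I will first note that $t\notin \mathcal{S}^{*}$ (since $p_j$ is exclusively covered by $s_j$ in $\mathcal{S}^{*}$, and $s_j$ itself contains $p$), hence $t\in \mathcal{S}\setminus \mathcal{S}^{*}$. I will then apply Lemma \ref{lemma:excl} to the consecutive pairs $(s_{j-1}, s_j)$ and $(s_j, s_{j+1})$: the input square $t$ cannot contain $\mathsf{Excl}(s_{j-1})\cup \mathsf{Excl}(s_j)$, nor $\mathsf{Excl}(s_j)\cup \mathsf{Excl}(s_{j+1})$. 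Combined with the hypothesis $p\notin t$ (where $p$ lies in the common ply region of $Q$) and $p_j\in t$, these constraints will pin $t$ to a very specific position relative to $s_j$, and a width/height argument for unit squares will then force $j$ to be near the left or right extreme of $Q$ (or near the transition square in the composite case).

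By careful geometric counting, I expect at most $3$ bad indices from the left boundary of $Q$, at most $3$ from the right boundary, and at most $3$ from the neighborhood of the transition square in the composite case, giving $|B|\leq 9$. The condition $k>13$ then ensures $|J|\geq k-9\geq 5$, so $J$ is nonempty. The main obstacle I anticipate is the intricate geometric case analysis of the composite $DESC|ASC$ clique, where arguments from the descending and ascending portions must be combined carefully near the transition square $s_k$, and ensuring that the unit-square domination argument is tight — in particular, correctly tracking how $y(s_j)$, and hence $y(p_j)$, shifts along the clique while $y(p)$ remains fixed within the ply region of $Q$.
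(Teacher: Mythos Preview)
Your overall plan mirrors the paper's: a case analysis on the clique type (monotonic $ASC$, monotonic $DESC$, composite $DESC|ASC$) together with the Key Lemma. However, there is a real gap in how you intend to invoke Lemma~\ref{lemma:excl}.

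You propose to apply Lemma~\ref{lemma:excl} to the pairs $(s_{j-1},s_j)$ and $(s_j,s_{j+1})$. But that yields essentially nothing: the lemma only forbids $t$ from containing \emph{all of} $\mathsf{Excl}(s_{j-1})\cup\mathsf{Excl}(s_j)$. Since you know only that $t$ contains the single point $p_j\in\mathsf{Excl}(s_j)$, not the entire region $\mathsf{Excl}(s_j)$, the constraint is vacuously satisfied whenever $t$ misses some other point of $\mathsf{Excl}(s_j)$, and you learn nothing about $t$'s relation to $\mathsf{Excl}(s_{j\pm1})$. The subsequent ``width/height argument'' that is supposed to force $j$ toward an extreme index is therefore left unsupported. (Your side observation that $t\notin\mathcal{S}^{*}$ is correct but unnecessary, since Lemma~\ref{lemma:excl} already applies to all input squares.)

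The paper's argument runs in the opposite direction and uses \emph{offset} pairs. It fixes $J$ up front (e.g.\ $J=\{3,\dots,k-3\}$ in the $DESC$ case, and $J=\{3,\dots,b-3,\,b+3,\dots,k-2\}$ in the composite case), takes any $j\in J$ and any $t$ with $p_j\in t$ but $p\notin t$, and---using only the unit-square geometry together with the monotone ordering of the $y(s_i)$'s---shows that such a $t$ is forced to \emph{fully cover} $\mathsf{Excl}(s_{j-2})\cup\mathsf{Excl}(s_{j-1})$ (when $p$ lies to the side of $t$) or $\mathsf{Excl}(s_{j+1})\cup\mathsf{Excl}(s_{j+2})$ (when $p$ lies above $t$). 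A single application of Lemma~\ref{lemma:excl} to that offset pair then gives the contradiction, uniformly for every $j\in J$; no separate step ``forcing $j$ to the boundary'' is needed. To repair your plan, shift the pair to which you apply the Key Lemma two steps away from $j$ and argue containment of those exclusive regions directly from $p_j\in t$, $p\notin t$.
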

\begin{proof}
There are two cases to consider.
\begin{enumerate}[start=1,label={\upshape \bfseries Case \arabic*:},wide = 0pt, leftmargin = 3em]
    \item $Q$ is a monotonic clique. There are two subcases.
    \begin{itemize}
        \item $Q$ is a monotonic descending clique. Define the set $J=\{3, \ldots, k-3\}$. By definition $y(s_{j-1})>y(s_{j})> y(s_{j+1})$, for each $j\in J$. Let $t$ be an input square that contains $p_j$ but not $p$. Again, there are two cases.
        \begin{itemize}
        \item $p$ lies to the right of $t$, i.e. $x(p)>x(t)+1$: Observe that $x(p_{j})< x(s_{k})$. Since $p\in s_1$ and $t$ does not contain $p$, therefore $t$ starts before $s_1$ starts, i.e., $x(t)<x(s_{1})$. Since $t$ contains $p_{j}\in s_{j}$, hence $t$ must end below the bottom boundary of $s_{j-1}$ and should end to the right of where $s_j$ starts, i.e., $y(t)< y(s_{j-1})$ and $x(t)+1> x(s_{j})$. Combining these with the fact that $t$ is a unit square, $t$ must cover $\mathsf{Excl}(s_{j-2})\cup \mathsf{Excl}(s_{j-1})$ as shown in Figure \ref{fig:domination_lemma}(b). This is a violation of Lemma \ref{lemma:excl}.
        
        \item $p$ does not lie to the right of $t$ but is above $t$, i.e., $x(t)+1>x(p)$ but $y(p)> y(t)+1$: The square $t$ must end after $s_k$ starts, i.e., $x(t)+1>x(s_k)$. Since $p\in s_k$, the square $t$ must end below $s_k$, i.e., $y(t)<y(s_k)$. Since $p_j\in t$, therefore $x(t)<x(p_j)$. So, $t$ covers $\mathsf{Excl}(s_{j+1})\cup \mathsf{Excl}(s_{j+2})$. This is a violation of Lemma \ref{lemma:excl}.
    \end{itemize}
        \item $Q$ is a monotonic ascending clique. Define the set $J=\{4, \ldots, k-2\}$. By definition $y(s_{j-1})<y(s_{j})<y(s_{j+1})$, for each $j\in J$. Let $t$ be an input square that contains $p_j$ but not $p$. By an analogous argument, $t$ would be forced to cover either $\mathsf{Excl}(s_{j+1})\cup \mathsf{Excl}(s_{j+2})$ (when $p$ lies to the left of $t$ as shown in Figure \ref{fig:domination_lemma}(a)) or $\mathsf{Excl}(s_{j-1})\cup \mathsf{Excl}(s_{j-2})$ (when $p$ does not lie to the left of $t$ but lies above $t$). This is again a violation of Lemma \ref{lemma:excl}.
    \end{itemize}
    \item $Q$ is a composite clique (of type $DESC|ASC$). Let $b$ be the index (in the left-to-right ordering) of the bottom-most square in $Q$. Define the set $J=\{3, \ldots, b-3, b+3, \ldots, k-2\}$. Using the two subcases in the previous case, we can arrive at a violation of Lemma $1$ for $j\in J$. Refer to Figure \ref{fig:domination_lemma}(c) for an illustration.
\end{enumerate}
\end{proof}

\begin{lemma}\label{lemma:domination2}
For $k> 13$, there exists a set $J\subset [k]$ with $|J|\geq k-5$ such that no input square $t\in \mathcal{S}$ can contain $p_{j}, p_{j+1}, p_{j+2}$ for $j\in J$.
\end{lemma}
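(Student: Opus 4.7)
The plan is to proceed by contradiction, echoing the structure of the proof of Lemma \ref{lemma:domination} and invoking the Key Lemma (Lemma \ref{lemma:excl}) for the contradiction. I would first define $J$ by excluding at most five indices from $[k]$: for a monotonic $Q$ (ascending or descending), take $J = \{3, 4, \ldots, k-3\}$, so that $|J| = k - 5$; for a composite $Q$ of type $DESC|ASC$ with transition index $b$, take a similar set that still excludes only five indices in total, with the excluded indices chosen to rule out the most troublesome triples $(j, j+1, j+2)$ straddling $b$.

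Fix $j \in J$ and suppose for contradiction that some input square $t \in \mathcal{S}$ contains all three of $p_j, p_{j+1}, p_{j+2}$. By construction, each $p_i$ is the bottom-most exclusive point of $s_i$ in the solution returned by Algorithm \ref{algo:exact_algo_hline}. In a monotonic descending clique, $y(s_j) > y(s_{j+1}) > y(s_{j+2})$ and $x(s_j) < x(s_{j+1}) < x(s_{j+2})$, and by examining which part of $\mathsf{Excl}(s_i)$ can realize the bottom-most $y$-value one sees that $p_j, p_{j+1}, p_{j+2}$ lie in the lower-left corner portion of each respective clique square. Hence the three points are strewn diagonally from upper-left to lower-right, and a unit square $t$ containing all three is tightly pinned around $s_{j+1}$: its $x$-range must cover $[x(p_j), x(p_{j+2})]$ and its $y$-range must cover $[y(p_{j+2}), y(p_j)]$. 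From this localization I would then argue, just as in the case analysis of Lemma \ref{lemma:domination}, that $t$ covers either $\mathsf{Excl}(s_j) \cup \mathsf{Excl}(s_{j+1})$ or $\mathsf{Excl}(s_{j+1}) \cup \mathsf{Excl}(s_{j+2})$, contradicting Lemma \ref{lemma:excl}. The ascending case is symmetric. The composite $DESC|ASC$ case is handled by splitting on whether $(j, j+1, j+2)$ lies wholly on one side of $b$, in which case the monotonic analysis applies directly, or straddles $b$, in which case either the argument still goes through or the offending $j$ lies in the excluded set.

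The main obstacle will be the geometric step that shows $t$ contains an entire pair of consecutive exclusive regions, rather than merely three points: this requires combining the unit-width/height of $t$ with the precise shape of the $\mathsf{Excl}(s_i)$'s in the monotonic clique, and especially in the composite case forces a careful check that only five indices around the transition square $b$ need to be excluded from $J$ to keep the argument valid.
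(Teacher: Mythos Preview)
Your proposal is correct and follows the same approach as the paper: define $J$ by excluding a few indices, then for $j\in J$ argue that any square $t$ containing $p_j,p_{j+1},p_{j+2}$ must contain $\mathsf{Excl}(s_j)\cup\mathsf{Excl}(s_{j+1})$ or $\mathsf{Excl}(s_{j+1})\cup\mathsf{Excl}(s_{j+2})$, contradicting Lemma~\ref{lemma:excl}. The paper's choices of $J$ are slightly sharper than yours: for monotonic $DESC$ it takes $J=\{1,\dots,k-3\}$, for monotonic $ASC$ it takes $J=\{2,\dots,k-2\}$ (so $|J|=k-3$ in either monotonic case), and for the composite $DESC|ASC$ clique with bottom-most index $b$ it takes $J=\{1,\dots,b-3\}\cup\{b+1,\dots,k-2\}$, which avoids all straddling triples while excluding exactly five indices. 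Your more conservative $J=\{3,\dots,k-3\}$ in the monotonic case still meets the $|J|\ge k-5$ bound, but be careful in the composite case: if you exclude indices at both ends \emph{and} around $b$, you will overshoot the budget of five, so there you must revert to the tighter monotonic ranges, exactly as the paper does.
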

\begin{proof}
There are two cases to consider. 
\begin{enumerate}[start=1,label={\upshape \bfseries Case \arabic*:},wide = 0pt, leftmargin = 3em]
    \item $Q$ is a monotonic clique. There are two subcases.
    \begin{itemize}
        \item $Q$ is a monotonic descending clique. Define the set $J=\{1, \ldots, k-3\}$. 
        \item $Q$ is a monotonic ascending clique. Define the set $J=\{2, \ldots, k-2\}$.
    \end{itemize}
    For $j \in J$, assume that $t\in \mathcal{S}$ contains the bottom-most exclusive points $p_{j}, p_{j+1}, p_{j+2}$ of three consecutive squares of $Q$, namely $s_{j}, s_{j+1}, s_{j+2}$ respectively. Then $t$ would contain either $\mathsf{Excl}(s_{j})\cup \mathsf{Excl}(s_{j+1})$ or $\mathsf{Excl}(s_{j+1})\cup \mathsf{Excl}(s_{j+2})$. This implies a violation of Lemma \ref{lemma:excl}, as can be seen in Figure \ref{fig:domination_lemma}. Thus, we have a contradiction.
    \item $Q$ is a composite clique (of type $DESC|ASC$). Let $b$ be the index (in the left-to-right ordering) of the bottom-most square in $Q$. We define the set $J = \{1, \ldots, b-3, b+1, \ldots, k-2\}$. If $1\leq j\leq b-3$, then $s_{j}, s_{j+1}, s_{j+2}$ are in a monotonic descending sequence and the corresponding subcase from Case $1$ applies. If $b+1 \leq j\leq k-2$, then $s_{j}, s_{j+1}, s_{j+2}$ are in a monotonic ascending sequence and the corresponding subcase from Case $1$ applies.
\end{enumerate}
\end{proof}
\begin{figure}[ht!]
    \centering
    \includegraphics[width=11cm]{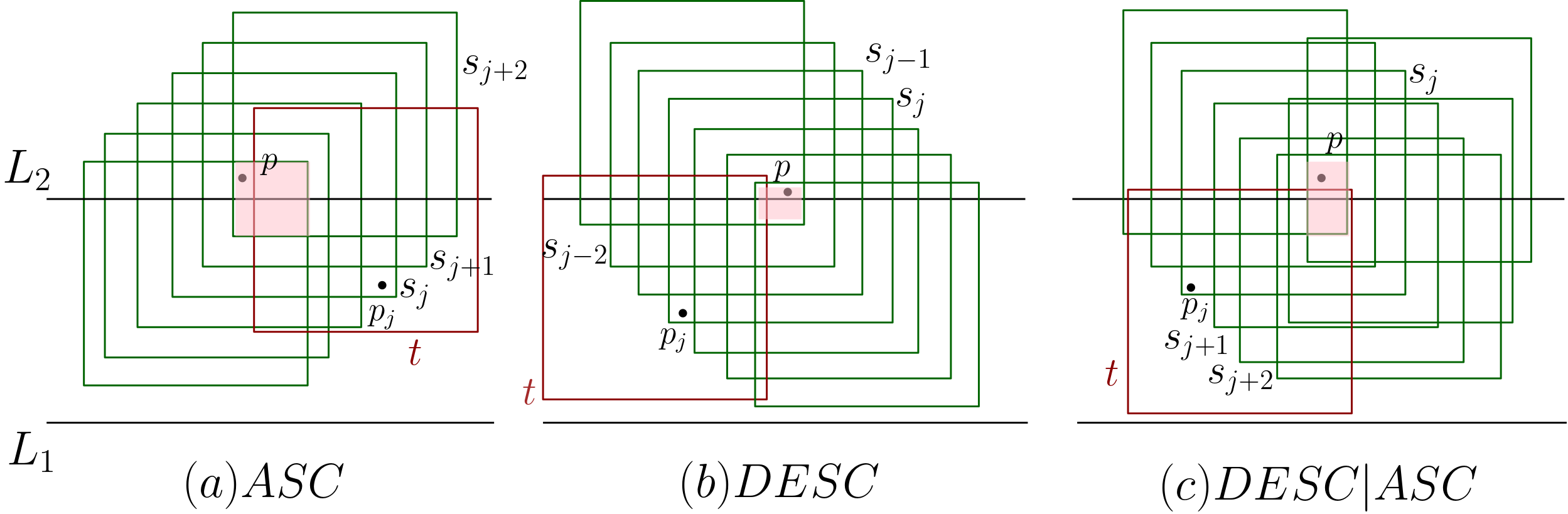}
    \caption{$p\in P'$ is contained in the ply region of the cliques. The red square $t$ does not contain $p$. (a) The green squares constitute a monotonic $ASC$ clique. (b) The green squares constitute a monotonic $DESC$ clique. In (c), the green squares form a composite clique of type $DESC|ASC$. In (a),the red square $t$ can swap out $s_{j+1}, s_{j+2}$. In (b),the red square $t$ can swap out $s_{j-1}, s_{j-2}$.}
    \label{fig:domination_lemma}
\end{figure}

\subsection{GMMGSC for the Slab Instance}\label{sec:slab_instance}
In this section, we present a constant approximation algorithm for the slab instance. We will use an LP relaxation (adapted from Bandyapadhyay et al. \cite{socg23_minply}) to partition the slab instance into two line instances.

For each unit square $s \in \mathcal{S}$, we create a variable $x_{s}$, indicating whether $s$ is included in our solution. In addition, create another variable $y$, which indicates the maximum number of times a point in $P'$ is covered by our solution. 
Then, we formulate the following linear programming relaxation.

\begin{equation*}
\begin{array}{rl}
    & \min \text{ } y  \\[2ex]
    \text{s.t.} & \sum\limits_{s\in \mathcal{S}, p\in s}  x_{s} \geq 1\quad \text{for all } p \in P\\[1ex]
    &  \sum\limits_{s\in \mathcal{S}, p'\in s}  x_{s} \leq y\quad \text{for all } p' \in P'\\[1ex]
    & 0 \leq x_{s}\leq 1\quad  \text{for all } s \in \mathcal{S}\\
\end{array}
\end{equation*}

The input set $\mathcal{S}$ of squares is partitioned naturally into two parts $\mathcal{S}_1, \mathcal{S}_2$ where $\mathcal{S}_1$ (resp. $\mathcal{S}_2$) consists of the input squares intersecting the bottom (resp. top) boundary line $L_1$ (resp. $L_2$) of the horizontal slab, say $\alpha$.

We partition the set $P$ of points within $\alpha$ into $P_1$ and $P_2$ using the LP. Let $(\{x^{*}_{s}\}_{s\in\mathcal{S}}, y^{*})$ be an optimal solution of the above linear program computed using a polynomial-time LP solver. For a point $p \in \mathbb{R}^2$ and $i \in \{1, 2\}$, define $\delta_{p,i}$ as the sum of $x^{*}_{s}$ for all $s\in \mathcal{S}_{i}$ satisfying $p \in s$. Then we assign each point $p \in P$ to $P_{i}$, where $i \in \{1, 2\}$ is the index that maximizes $\delta_{p,i}$. 

Now we solve the two line instances $(P_i, P', \mathcal{S}_{i})$, for $i\in \{1, 2\}$ using Algorithm \ref{algo:exact_algo_hline}. Finally, we output the union of the solutions of these two line instances. We discard redundant squares from the solution, if any. For $i\in \{1, 2\}$, denote by $\mathcal{S}_i^{*}$ the solution returned by Algorithm \ref{algo:exact_algo_hline} for the line instance $(P_i, P', \mathcal{S}_{i})$. We state and prove the following useful lemma. 

\begin{lemma}\label{lemma:2approx_line}
For every $i\in \{1, 2\}$, $\mathsf{memb}(P', \mathcal{S}^{*}_{i})\leq 4\cdot OPT + 9$, where $\mathcal{S}^{*}_{i}$ is the solution returned by Algorithm \ref{algo:exact_algo_hline} for the instance $(P_i, P', \mathcal{S}_{i})$ and $OPT$ is the minimum membership for the slab instance.
\end{lemma}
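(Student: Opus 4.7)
The plan is to reduce bounding $\mathsf{memb}(P',\mathcal{S}^*_i)$ to bounding the size $k$ of the leftmost maximum clique $Q\subseteq\mathcal{S}^*_i$. All maximum discrete cliques of $\mathcal{S}^*_i$ have the same size, and any $P'$-point in the common intersection of such a clique realises this membership, so $\mathsf{memb}(P',\mathcal{S}^*_i)=k$. If $k\le 13$, then (using $OPT\ge 1$, which holds whenever $P$ is nonempty) $k\le 4\cdot OPT+9$ is immediate. So I would assume $k>13$, fix a point $p\in P'$ in the ply region of $Q$, and let $p_1,\dots,p_k$ be the bottom-most exclusive points of $s_1,\dots,s_k$.

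From the LP relaxation of Section \ref{sec:slab_instance}, the membership constraint at $p$ gives $\sum_{s\in\mathcal{S}_i,\,p\in s}x^*_s\le y^*\le OPT$. Since $p_j\in P_i$, the assignment rule combined with the LP's coverage constraint $\sum_{s\in\mathcal{S},\,p_j\in s}x^*_s\ge 1$ yields $\delta_{p_j,i}=\sum_{s\in\mathcal{S}_i,\,p_j\in s}x^*_s\ge 1/2$ for every $j\in[k]$. Applying Lemma \ref{lemma:domination}, I obtain $J_1\subseteq[k]$ with $|J_1|\ge k-9$ such that every input square containing $p_j$ (for $j\in J_1$) also contains $p$. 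Summing the inequality $\delta_{p_j,i}\ge 1/2$ over $j\in J_1$ and swapping the order of summation,
\[
\frac{k-9}{2}\;\le\;\sum_{j\in J_1}\,\sum_{s\in\mathcal{S}_i,\,p_j\in s}x^*_s\;=\;\sum_{s\in\mathcal{S}_i,\,p\in s}x^*_s\cdot c_s,\qquad c_s:=|\{j\in J_1:p_j\in s\}|.
\]

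The decisive step is the uniform bound $c_s\le 2$. I plan to derive it by combining Lemma \ref{lemma:domination2} with an interpolation observation that exploits the monotonic structure of $Q$: on a descending branch the exclusive-point coordinates satisfy $x(p_j)\in[x(s_j),x(s_{j+1}))$ and $y(p_j)\in[y(s_j),y(s_{j-1}))$, so $(x(p_j),y(p_j))$ is strictly monotone in both coordinates (with a symmetric statement for ascending branches). Hence if a unit input square $t$ contains $p_{j_1}$ and $p_{j_3}$ on the same monotonic branch with $j_3\ge j_1+2$, every intermediate $p_j$ has coordinates sandwiched between those of $p_{j_1}$ and $p_{j_3}$ and must lie in $t$ as well; in particular $t$ then contains three consecutive $p_j$'s, contradicting Lemma \ref{lemma:domination2}. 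Therefore $c_s\le 2$, and substituting into the displayed inequality gives $(k-9)/2\le 2\cdot OPT$, i.e.\ $k\le 4\cdot OPT+9$.

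The main obstacle I foresee is establishing $c_s\le 2$ for a composite $DESC|ASC$ clique, where coordinate monotonicity is broken across the transition square. The set $J_1$ from Lemma \ref{lemma:domination} already excludes a five-index window $\{b-2,\dots,b+2\}$ around the transition $b$ in the composite case, providing a buffer; the remaining work is a careful case analysis showing that any input square containing three exclusive points whose indices all lie in $J_1$ must have all three indices on a single monotonic branch, thereby reducing to the interpolation argument above and yielding the contradiction with Lemma \ref{lemma:domination2}.
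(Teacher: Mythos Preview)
Your proposal follows essentially the same approach as the paper: reduce to bounding $k=|Q|$, handle $k\le 13$ trivially, and for $k>13$ combine the LP constraints $\delta_{p_j,i}\ge 1/2$ and $\sum_{s\ni p}x^*_s\le y^*$ via Lemmas~\ref{lemma:domination} and~\ref{lemma:domination2} to get $(k-9)/4\le y^*\le OPT$. Your identification and justification of the multiplicity bound $c_s\le 2$ is in fact more careful than the paper's own proof, which simply cites Lemma~\ref{lemma:domination2} (``no square contains $p_j,p_{j+1},p_{j+2}$'') and asserts that each $x_s$ appears at most twice in the double sum without further argument; your interpolation step---showing that the coordinate monotonicity $x(s_j)\le x(p_j)<x(s_{j+1})$ and $y(s_j)\le y(p_j)<y(s_{j-1})$ forces any square containing two non-adjacent $p_j$'s on a branch to contain all intermediate ones---is exactly what is needed to pass from ``no three consecutive'' to ``no three at all,'' and it is correct.
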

\begin{proof}
We consider the case when $i=2$, i.e., all the squares intersect the top boundary line $L_2$ of the slab $\alpha$. The argument for the case of $i=1$ is identical and is not duplicated. Consider the leftmost maximum clique $Q$ in the intersection graph of the squares in $\mathcal{S}^{*}_{2}$. Suppose, $\mathsf{memb}(P', \mathcal{S}^{*}_{2})=k$ and the squares in $Q$ from left to right are $s_1, \ldots, s_k$. 

If $k\leq 13$, then $\mathsf{memb}(P', \mathcal{S}^{*}_{i})\leq 4\cdot OPT + 9$ is satisfied trivially since $OPT\geq 1$.

Assume that $k>13$. Denote by $p_{j}$ the bottom-most exclusive point of $s_{j}$. Let $p$ be any point in $P'$ contained in the ply region of $Q$. By Lemma \ref{lemma:domination}, for each $j\in J$, every input square containing $p_{j}$ also contains $p$. By Lemma \ref{lemma:domination2}, no input square $s$ contains $p_j, p_{j+1}, p_{j+2}$ for $j\in J$. Thus we can write
\begin{equation*}
\sum_{s\in \mathcal{S}_{2}, p\in s} x_{s} \geq \frac{1}{2} \sum_{\forall j\in J}\sum_{s\in \mathcal{S}_{2}, p_j\in s} x_{s}
\end{equation*}
Since a variable $x_{s}$ may appear at most twice in the double-sum on the right-hand side, we have multiplied by the factor $1/2$. The left-hand side of the above inequality is bounded above by the LP optimal $y^{*}$. Since every $p_{j}$ belongs to $P_2$, we have $\sum_{s\in \mathcal{S}_2, p_j\in s} x_{s} \geq 1/2$ from the partitioning criteria of $P$ into $P_1$ and $P_2$. From the proofs of Lemma \ref{lemma:domination} and Lemma \ref{lemma:domination2}, we observe that $|J|\geq k-9$, irrespective of the type of clique $Q$. So we can write
\begin{align*}
y^{*} &\geq \frac{1}{2}\cdot(k-9)\cdot \frac{1}{2}    \\
\implies k &\leq 4y^{*} + 9
\end{align*}
By definition, the optimal ply value, $OPT$, is at least $y^{*}$. Therefore, $k \leq 4\cdot OPT + 9$.
\end{proof}

\begin{lemma}\label{lemma: 2approx_slab}
For $i\in \{1, 2\}$, let $\mathcal{S}^{*}_{i}$ be the solution for the line instance $(P_i, P', \mathcal{S}_{i})$ obtained via Algorithm \ref{algo:exact_algo_hline}. Then $\mathcal{S}^{*} = \mathcal{S}^{*}_{1}\cup \mathcal{S}^{*}_{2}$ is a feasible solution to the slab instance $(P,  P', \mathcal{S})$ with $\mathsf{memb}(P', \mathcal{S}^{*})\leq 8\cdot OPT + 18$.
\end{lemma}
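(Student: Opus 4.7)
The plan is to verify two things separately: feasibility of $\mathcal{S}^{*}$ as a cover for $P$, and the claimed membership bound with respect to $P'$.

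For feasibility, I would first observe that the partitioning step assigns each $p \in P$ to exactly one of $P_1, P_2$, so $P = P_1 \cup P_2$. Since $(P_i, P', \mathcal{S}_i)$ is a valid line instance (each point of $P_i$ lies within the slab $\alpha$ and is covered by at least one square of $\mathcal{S}_i$ that crosses the relevant boundary line), Algorithm \ref{algo:exact_algo_hline} returns a set cover $\mathcal{S}^{*}_i$ of $P_i$. Taking the union gives a cover of $P$, establishing feasibility of $\mathcal{S}^{*} = \mathcal{S}^{*}_1 \cup \mathcal{S}^{*}_2$. The subsequent removal of redundant squares preserves feasibility by definition.

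For the membership bound, the key point is that the membership at any point $p' \in P'$ is additive across the two subsolutions: for each $p' \in P'$,
\begin{equation*}
|\{s \in \mathcal{S}^{*} : p' \in s\}| \;\leq\; |\{s \in \mathcal{S}^{*}_1 : p' \in s\}| \;+\; |\{s \in \mathcal{S}^{*}_2 : p' \in s\}|.
\end{equation*}
Taking the maximum over $p' \in P'$ and applying Lemma \ref{lemma:2approx_line} to each of the two line subinstances (noting that the same $OPT$ appears on the right-hand side in both applications, since $OPT$ is the minimum membership of the slab instance, which upper-bounds the LP optimum used in the proof of Lemma \ref{lemma:2approx_line}) yields
\begin{equation*}
\mathsf{memb}(P', \mathcal{S}^{*}) \;\leq\; \mathsf{memb}(P', \mathcal{S}^{*}_1) + \mathsf{memb}(P', \mathcal{S}^{*}_2) \;\leq\; 2\,(4 \cdot OPT + 9) \;=\; 8 \cdot OPT + 18.
\end{equation*}

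The only subtle point, and the one I would make sure to justify carefully, is the implicit claim that the $OPT$ in Lemma \ref{lemma:2approx_line} (originally stated for a line instance) may legitimately be taken to be the optimum membership of the full slab instance. This follows because the proof of Lemma \ref{lemma:2approx_line} bounds $k$ in terms of the LP optimum $y^{*}$ of the slab LP, and $y^{*}$ is itself a lower bound on the integral optimum membership of the slab instance; no separate appeal to a line-instance optimum is needed. No other obstacle is anticipated.
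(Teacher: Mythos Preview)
Your proposal is correct and follows essentially the same approach as the paper's own proof: verify feasibility via $P=P_1\cup P_2$ and the fact that each $\mathcal{S}^{*}_i$ covers $P_i$, then bound the membership at any $p'\in P'$ by summing the contributions from $\mathcal{S}^{*}_1$ and $\mathcal{S}^{*}_2$ and applying Lemma~\ref{lemma:2approx_line} to each. Your explicit remark that the $OPT$ in Lemma~\ref{lemma:2approx_line} is the slab-instance optimum (via the LP lower bound $y^{*}\le OPT$) is a useful clarification that the paper leaves implicit.
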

\begin{proof}
Since $\mathcal{S} = \mathcal{S}_{1}\cup \mathcal{S}_{2}$, so $\mathcal{S}^{*}$ is a feasible set cover for the slab instance $(P, P', \mathcal{S})$. Consider an arbitrary point $p\in P'$. Using Lemma \ref{lemma:2approx_line} for $i\in \{1, 2\}$, we get that the number of unit squares in $\mathcal{S}_{i}^{*}$ containing $p$ is at most $4\cdot OPT + 9$. Thus, $\mathsf{memb}(P', \mathcal{S}^{*})\leq 8\cdot OPT + 18$.
\end{proof}
\subsection{Putting everything together}
\begin{theorem}
GMMGSC problem admits an algorithm that runs in $O(m^{2}\log m + m^{2}n)$ time, and computes a set cover whose membership is at most $16\cdot OPT + 36$, where $OPT$ denotes the minimum membership.
\end{theorem}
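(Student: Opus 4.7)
The plan is to decompose the plane into unit-height horizontal slabs $\alpha_1,\alpha_2,\ldots$ at consecutive integer heights and invoke the Section~\ref{sec:slab_instance} routine on each. For each slab $\alpha$ I set $\mathcal{S}_\alpha$ to be the input squares intersecting $\alpha$ and form the slab instance $(P\cap\alpha,P',\mathcal{S}_\alpha)$. Recall that the Section~\ref{sec:slab_instance} algorithm actually produces $\mathcal{S}^*_\alpha$ as a union of two line-instance solutions, one anchored at each boundary of $\alpha$, via Algorithm~\ref{algo:exact_algo_hline}. My final output is $\mathcal{S}^*=\bigcup_\alpha \mathcal{S}^*_\alpha$; feasibility for $P$ is automatic since every $p\in P$ lies in a unique slab $\alpha$ and is covered by $\mathcal{S}^*_\alpha$.

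For the membership bound, a naive three-slab union of the $8\cdot OPT+18$ slab guarantees from Lemma~\ref{lemma: 2approx_slab} gives only $24\cdot OPT+54$, so I would instead descend to the line-instance level. Fix $p'\in P'$ and let $\alpha$ be the slab containing it. By the non-degeneracy assumption, any unit square covering $p'$ crosses exactly one integer horizontal line, which must be either the top or the bottom boundary of $\alpha$; consequently such a square lies in exactly two slab subproblems (namely $\alpha$ together with one of $\alpha\pm 1$) and, within each of those, in only the line instance anchored to the boundary it crosses. The squares of $\mathcal{S}^*$ covering $p'$ therefore come from at most four line-instance solutions in total: the two belonging to slab $\alpha$, together with one from each of slabs $\alpha-1$ and $\alpha+1$. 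Applying Lemma~\ref{lemma:2approx_line} to each (the optimum of every slab subproblem being at most the GMMGSC optimum), each line-instance solution contributes at most $4\cdot OPT+9$ squares containing $p'$, yielding $\mathsf{memb}(P',\mathcal{S}^*)\le 4(4\cdot OPT+9)=16\cdot OPT+36$.

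For the running time, non-degeneracy also implies that each input square crosses exactly one integer horizontal line and hence lies in exactly two sets $\mathcal{S}_\alpha$, so $\sum_\alpha|\mathcal{S}_\alpha|\le 2m$ and $\sum_\alpha|\mathcal{S}_\alpha|^2\le\bigl(\max_\alpha|\mathcal{S}_\alpha|\bigr)\cdot\sum_\alpha|\mathcal{S}_\alpha|\le 2m^2$. Theorem~\ref{thm:line_run_time} provides $O(|\mathcal{S}_\alpha|^2\log|\mathcal{S}_\alpha|+|\mathcal{S}_\alpha|^2 n)$ time for each of the (at most two) line instances in slab $\alpha$, absorbing $|P'|$ into $n$; summing over slabs gives the claimed $O(m^2\log m+m^2n)$ bound. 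The polynomial-time LP solves used to partition each slab's points across its two line instances fit within this budget.

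The principal obstacle is the four-line-instance counting in the second paragraph: without exploiting non-degeneracy to confine each square to two slabs and, within each slab, to a single line instance, one is forced into the weaker $24\cdot OPT+54$ regime. Once that structural observation is in place, everything else reduces to a routine application of Lemma~\ref{lemma:2approx_line}, Lemma~\ref{lemma: 2approx_slab}, and Theorem~\ref{thm:line_run_time}.
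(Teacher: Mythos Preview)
Your proposal is correct and follows essentially the same route as the paper: decompose into unit-height slabs, split each slab into two line instances via the LP, run Algorithm~\ref{algo:exact_algo_hline}, and take the union; the $16\cdot OPT+36$ bound comes from observing that any $p'\in P'$ is touched by at most four line-instance solutions and invoking Lemma~\ref{lemma:2approx_line} for each (with slab $OPT\le$ global $OPT$). Your running-time accounting via $\sum_\alpha|\mathcal{S}_\alpha|\le 2m$ is a clean instantiation of the ``standard trick'' the paper alludes to, and your four-line-instance count is exactly the paper's enumeration phrased through the two adjacent slabs rather than through the two bounding lines.
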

\begin{proof}
We divide the plane into unit-height horizontal slabs. For each non-empty slab $\alpha$, we partition the slab instance into two subproblems, namely the line instances corresponding to the boundary lines of $\alpha$ using the LP-relaxation technique described in Section \ref{sec:slab_instance}. We solve each line instance using Algorithm \ref{algo:exact_algo_hline}. Then we output the union of the solutions thus obtained while discarding redundant squares, if any. Consider any point $p\in P'$. Suppose $p$ lies within a slab $\alpha$ whose boundary lines are $L_1$ and $L_2$. Since the squares containing $p$ must intersect either $L_1$ or $L_2$, $p$ can be contained in squares from at most $4$ subproblems. One is a line instance corresponding to $L_1$ where the points lie above $L_1$. The other is a line instance corresponding to $L_1$ where the points lie below $L_1$. The other two subproblems correspond to $L_2$. Thus, due to Lemma \ref{lemma:2approx_line}, the number of squares of our solution containing $p$ is at most $16\cdot OPT + 36$, where $OPT$ is the optimal membership value for the instance $(P, P', \mathcal{S})$. 

Formulating and solving the LP takes $\tilde{O}(nm)$ time \cite{LP_near_linear_Allen_Zhu2019}. The overall running time of the algorithm is dominated by the running time of Algorithm \ref{algo:exact_algo_hline}. There are at most $O(\min(n, m))$ line instances to solve. By a standard trick, the total running time remains $O(m^{2}\log m + m^{2}n)$ using Theorem \ref{thm:line_run_time}.
\end{proof}
%
%
%
%
\bibliographystyle{splncs04}
\bibliography{refs}
\end{document}